\newcommand\vldbdoi{10.14778/3587136.3587143}
\newcommand\vldbpages{1685 - 1698}
\newcommand\vldbvolume{16}
\newcommand\vldbissue{7}
\newcommand\vldbyear{2023}
\newcommand\vldbauthors{\authors}
\newcommand\vldbtitle{\shorttitle} 
\newcommand\vldbavailabilityurl{}
\newcommand\vldbpagestyle{empty} 
\newtheorem{definition}{Definition}
\newtheorem{theorem}{Theorem}
\def\BibTeX{{\rm B\kern-.05em{\sc i\kern-.025em b}\kern-.08em
    T\kern-.1667em\lower.7ex\hbox{E}\kern-.125emX}}
\begin{document}

\title{\textsc{GriDB}: Scaling Blockchain Database via Sharding and Off-Chain Cross-Shard Mechanism}

\author{Zicong Hong}
\affiliation{
\institution{Hong Kong Polytechnic University}}
\email{zicong.hong@connect.polyu.hk}

\author{Song Guo}
\affiliation{
\institution{Hong Kong Polytechnic University, PolyU Shenzhen Research Institute}}
\email{song.guo@polyu.edu.hk}

\author{Enyuan Zhou}
\affiliation{
\institution{Hong Kong Polytechnic University}}
\email{21038299r@connect.polyu.hk}

\author{Wuhui Chen}
\affiliation{
\institution{Sun Yat-sen University}}
\email{chenwuh@mail.sysu.edu.cn}

\author{Huawei Huang}
\affiliation{
\institution{Sun Yat-sen University}}
\email{huanghw28@mail.sysu.edu.cn}

\author{Albert Zomaya}
\affiliation{
\institution{The University of Sydney}}
\email{albert.zomaya@sydney.edu.au}


\begin{abstract}
Blockchain databases have attracted widespread attention but suffer from poor scalability due to underlying non-scalable blockchains. 
While blockchain sharding is necessary for a scalable blockchain database, it poses a new challenge named \emph{on-chain cross-shard database services}. 
Each cross-shard database service (e.g., cross-shard queries or inter-shard load balancing) involves massive cross-shard data exchanges, while the existing cross-shard mechanisms need to process each cross-shard data exchange via the consensus of all nodes in the related shards (i.e., on-chain) to resist a Byzantine environment of blockchain, which eliminates sharding benefits.

To tackle the challenge, this paper presents \textsc{GriDB}, the first scalable blockchain database, by designing a novel \emph{off-chain cross-shard mechanism} for efficient cross-shard database services.
Borrowing the idea of off-chain payments, \textsc{GriDB} delegates massive cross-shard data exchange to a few nodes, each of which is randomly picked from a different shard. 
Considering the Byzantine environment, the untrusted delegates cooperate to generate succinct proof for cross-shard data exchanges, while the consensus is only responsible for the low-cost proof verification.
However, different from payments, the database services' verification has more requirements (e.g., completeness, correctness, freshness, and availability); thus, we introduce several new \emph{authenticated data structures} (ADS).
Particularly, we utilize consensus to extend the threat model and reduce the complexity of traditional accumulator-based ADS for verifiable cross-shard queries with a rich set of relational operators.
Moreover, we study the necessity of inter-shard load balancing for a scalable blockchain database and design an off-chain and live approach for both efficiency and availability during balancing.
An evaluation of our prototype shows the performance of \textsc{GriDB} in terms of scalability in workloads with queries and updates.
\end{abstract}

\maketitle

\pagestyle{\vldbpagestyle}
\begingroup\small\noindent\raggedright\textbf{PVLDB Reference Format:}\\
\vldbauthors. \vldbtitle. PVLDB, \vldbvolume(\vldbissue): \vldbpages, \vldbyear.\\
\href{https://doi.org/\vldbdoi}{doi:\vldbdoi}
\endgroup
\begingroup
\renewcommand\thefootnote{}\footnote{\noindent
This work is licensed under the Creative Commons BY-NC-ND 4.0 International License. Visit \url{https://creativecommons.org/licenses/by-nc-nd/4.0/} to view a copy of this license. For any use beyond those covered by this license, obtain permission by emailing \href{mailto:info@vldb.org}{info@vldb.org}. Copyright is held by the owner/author(s). Publication rights licensed to the VLDB Endowment. \\
\raggedright Proceedings of the VLDB Endowment, Vol. \vldbvolume, No. \vldbissue\ %
ISSN 2150-8097. \\
\href{https://doi.org/\vldbdoi}{doi:\vldbdoi} \\
}\addtocounter{footnote}{-1}\endgroup

\ifdefempty{\vldbavailabilityurl}{}{
\vspace{.3cm}
\begingroup\small\noindent\raggedright\textbf{PVLDB Artifact Availability:}\\
The source code, data, and/or other artifacts have been made available at \url{\vldbavailabilityurl}.
\endgroup
}

\section{Introduction}
\label{sec:introduction}


Characterized by trustworthiness, transparency, and traceability, blockchain technologies have been integrated into many areas, such as cryptocurrency~\cite{bitcoin}, supply chain~\cite{supply_chain}, international trade~\cite{international_trade}, etc. 
In database management, blockchain technologies have attracted considerable interest in upgrading traditional databases to blockchain-empowered distributed databases~\cite{bc_vs_dd}, which forms an emerging research direction namely \emph{blockchain databases}.

Compared with traditional distributed databases, blockchain databases transact and record data via blockchains and construct an abstract database layer supporting various query functionalities on top of blockchains, which endow the distributed databases with immutability and traceability~\cite{blockchaindb, semantic, SEBDB, falcondb, vchain, gemtree,10.14778/3510397.3510406}. For example, BlockchainDB provides shared tables as easy-to-use abstractions as well as a key/value interface to read/write data stored in the blockchain~\cite{blockchaindb}. Pei \textit{et al.} introduces a Merkle Semantic Trie-based index to support semantic query, range query and fuzzy query on the blockchain~\cite{semantic}. SEBDB adds relational data semantics into blockchain storage and thus supports SQL query~\cite{SEBDB} and FalconDB presents a blockchain database with SQL query with time window~\cite{falcondb}.

Due to the underlying non-scale-out blockchains, most existing blockchain databases suffer from poor scalability.
For example, schemes in~\cite{SEBDB,falcondb} adopt Tendermint which achieves throughput of about 1000 transactions per second (TPS) but its network scale is less than 100. Schemes in~\cite{semantic,gemtree} adopt Ethereum aiming to support thousands of participants but only have tens of TPS.
The poor scalability makes the blockchain databases hardly meet the quality of service required in large-scale business in practice. 


Sharding is one of the most promising technologies for the blockchain scalability~\cite{elastico, omniledger, rapidchain, monoxide, sigmod_sharding, pldi, byshard}.
It divides the nodes into small groups called \emph{shards}, which can handle transactions in parallel and alleviate the storage overhead for each node. In such an approach, the transaction throughput scales linearly with the number of nodes. 
To develop a scalable blockchain database, this paper is going to construct an abstract database layer on a sharding blockchain by distributing database data and the corresponding task of storing, querying, or updating to different blockchain shards.
However, such a sharding for database storage and workload introduces a new requirement namely \emph{cross-shard database services}, i.e., \emph{data aggregation} for query and \emph{workload balancing} for management.

\begin{figure}[t]
	\centering
	\subfloat[][]{
		\begin{minipage}[t]{0.65\linewidth}
			\centering
            \includegraphics[width=\linewidth]{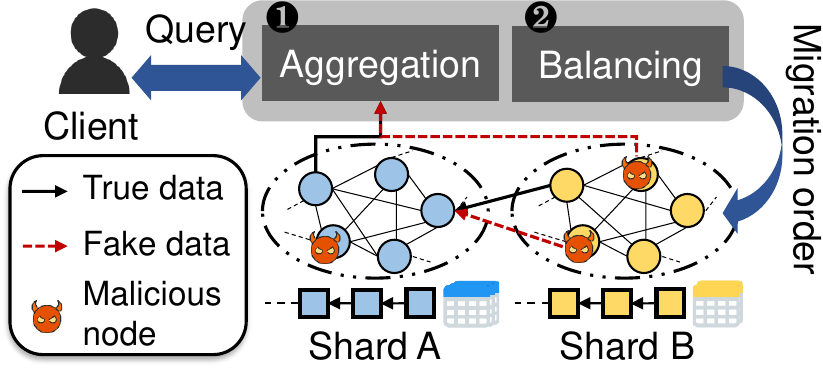}
    \end{minipage}%
    \label{fig:introduction_2}
	}
	\subfloat[][]{
		\begin{minipage}[t]{0.32\linewidth}
            \centering
			\includegraphics[width=\linewidth]{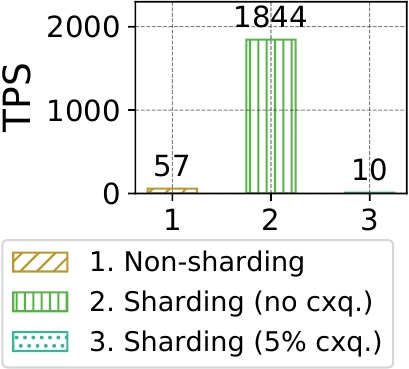}
		\end{minipage}
		\label{fig:preliminary}
	}
	\centering
    \caption{(a) Illustration for sharding blockchain database, which requires two new functions, i.e., data aggregation for query and workload balancing for management. 
    (b) Transaction throughput of non-sharding and sharding blockchain databases. (cxq. represents cross-shard query.)}
\end{figure}

As shown in \autoref{fig:introduction_2}, the data aggregation is caused by the sharding for storage. Particularly, the data related to a query may be stored by the blockchain nodes from multiple shards; thus, the query requires the involvement of several shards.
For example, if there are two tables stored in Shard A and B, respectively, then a SQL \texttt{JOIN} query combining these two tables involves both shards.
Moreover, the workload balancing is caused by the sharding for workload. 
Particularly, due to the sharding, each shard is only responsible for the workload of query and update to its storage. The demand imbalanced and dynamic nature of applications results in workload imbalance among shards, which significantly degrades the performance of sharding blockchain database.

The Byzantine environment of blockchain databases makes the technologies of traditional distributed databases no longer applicable. 
Malicious nodes may collude with each other and violate the protocol in arbitrary manners.
For example, in distributed databases~\cite{azure_sql}, a query can be easily realized by requesting from one database node in every related shard. 
However, in a sharding blockchain database, the correctness, completeness, and freshness of cross-shard queries can hardly be guaranteed when accidentally requesting from a malicious node. 
Moreover, elastic workload balancing is the first-class feature in modern databases~\cite{estore} achieved by load migration. 
However, unlike one-to-one crash-tolerant migration in most distributed databases, the sharding blockchain database requires a many-to-many migration across shards in which malicious nodes can intercept, tamper or forge the migrating table.

To resist the Byzantine faults for cross-shard database services, an intuitive idea is to process them through the cross-shard mechanism of blockchain sharding. 
In detail, the core of cross-shard database services is to transfer tables among shards despite Byzantine failures. 
The cross-shard mechanism guarantees that each data transfer (e.g., money transfer in the conventional blockchain) is agreed by the majority of honest nodes in all its related shards.
Transferring a table among shards through the mechanism
can guarantee that the table transfer will not be compromised (detailed in \autoref{sec:preliminary_sharding} and \autoref{sec:strawman}).
However, such an idea is costly.
On the one hand, each query or migration involves a massive set of semantics-related data (e.g., rows belonging to a table) in a blockchain database.
On the other hand, all existing cross-shard mechanisms are on-chain (i.e., requiring the consensus of all the related shards).
Therefore, all nodes in the related shards need to participate in the consensus on numerous data.
As proved in \autoref{fig:preliminary}, the existing on-chain cross-shard mechanisms cannot support even 5\% cross-shard queries in a sharding blockchain database with 32 shards (detailed in \autoref{sec:experiment}).

To this end, this paper focuses on relational blockchain database and proposes the first relational sharding blockchain database, named \textsc{GriDB}. 
In comparison with the previous blockchain databases, \textsc{GriDB} guarantees high scalability while providing support for relational database services in blockchain sharding.
Motivated by the idea of off-chain payments and verifiable computing, \textsc{GriDB} enables an off-chain execution of the cross-shard database services by adopting \emph{authentication data structure} (ADS) to delegate cross-shard communication-intensive tasks to a few nodes in a verifiable manner.
We summarize our contributions as follows.
\begin{itemize}
    \item \textsc{GriDB} introduces relational data semantics and query functionality into blockchain transactions to abstract a sharding blockchain as a distributed relational database. The clients can send requests to any untrusted blockchain nodes for storing, manipulating and retrieving data.
    \item To provide a query layer of abstraction on sharded data, we design a cooperative delegation-based approach with a constant complexity of data transfer among shards without sacrificing security. It delegates the tasks of data aggregation to a few nodes in different shards
    and constructs a succinct proof used to on-chain verify. 
    \item To meet the dynamically skewed workloads and achieve inter-shard balancing, we propose an off-chain live migration that migrates the database service among shards with security, low cost, and minimum interruption. 
    \item We develop a prototype for \textsc{GriDB} and conduct a comprehensive evaluation. The result shows that \textsc{GriDB} achieves a scalable throughput for SQL linearly increasing with the shard number compared with the non-sharding works.
\end{itemize}


\section{Preliminaries}
\label{sec:preliminary}

\subsection{Blockchain Sharding}
\label{sec:preliminary_sharding}

Blockchain sharding achieves scalability by dividing the blockchain nodes into multiple shards, each of which is responsible for receiving, validating, and processing part of transactions. 
A blockchain sharding scheme is generally composed of \emph{shard formation}, \emph{intra-shard consensus} and \emph{cross-shard mechanism} as follows. 

1) At the beginning, each node is assigned to a shard. For example, in Elastico~\cite{elastico}, a node generates an identity by solving a Proof-of-Work (PoW) puzzle to avoid Sybil attack and is assigned to a shard with an ID equal the last several bits of the identity. 

2) In each round, the nodes within a shard run an intra-shard consensus to agree on the same block consisting of valid transactions. 
For example, some systems~\cite{elastico, omniledger, rapidchain} adopt a Byzantine fault tolerant (BFT) protocol (e.g., PBFT~\cite{pbft} and collective signing BFT~\cite{cosi}) as intra-shard consensus to resist malicious nodes. 
An intra-shard consensus should satisfy \emph{safety}, i.e., the honest nodes agree on the same valid block in each round, and \emph{liveness}, i.e., the block in each round will eventually be committed or aborted.

3) Due to the sharding, some transactions may involve the state of more than one shard and thus are called \emph{cross-shard} transactions. 
The core idea of most of the existing cross-shard mechanisms is to divide each cross-shard transaction into several sub-transactions, each of which is processed by a shard. 
Then, the shards handle them with the guarantee of ACID, i.e., atomicity, consistency, isolation, and durability, for every cross-shard transaction. 
In other words, if a cross-shard transaction is committed by the cross-shard mechanism successfully, it means that the majority of honest nodes in every related shard receive the same transaction and agree that it is valid. 




\subsection{Authenticated data structures}
\label{sec:preliminary_vso}


\textbf{Verifiable set operation (VSO).} VSO~\cite{verifiable_set_1,verifiable_set_2} is an ADS which enables clients to outsource set computation tasks (e.g., intersection and union) to an untrusted server.
The server returns the result with an ADS, depending on which, one can verify the result without downloading the original data and re-calculating by itself. 
VSO consists of the following probabilistic polynomial-time algorithms:


\begin{itemize}
    \item $(sk, pk) \leftarrow \mathrm{genKey}(1^\lambda)$: Let $\lambda$ denote the security parameter. Based on a bilinear-map accumulator primitive \cite{accumulator_pri}, it begins by choosing randomly a value $s \in \mathbb{Z}_p$ and letting $sk=s$ be the secret key and $pk=(g^s, \cdots, g^{s^q})$ be the public key, where $g$ is the generator of a cycle multiplicative group $\mathbb{G}$ and $q$ is an upper-bound on the cardinality of sets.
    \item $acc(X) \leftarrow \mathrm{setup}(X)$: For a set $X \subset \mathbb{Z}_p$, it computes the accumulation value of $X$ by $acc(X)=g^{\prod \limits_{x \in X}(x + s)}$.
    \item $(X^{*}, \pi) \leftarrow \mathrm{prove}(X^i, X^j, pk)$: Given two sets $X^i$ and $X^j$, it returns the intersection (or union) result $X^*$ and a proof $\pi$.
    \item $\{\mathrm{accept}, \mathrm{reject}\} \leftarrow \mathrm{verify}(acc(X^i), acc(X^j), X^*, \pi, pk)$: On input the accumulation values $acc(X^i)$, $acc(X^j)$, a proof $\pi$, and the public key $pk$, it return $accept$ if and only if $X^i \cup X^j  = X^*$ for intersection (or $X^i \cap X^j  = X^*$ for union).
\end{itemize}


The unforgeability for VSO has been proved to be held under the q-SBDH assumption~\cite{qsbdh} in
bilinear groups. For more details about the proof, refer to~\cite{verifiable_set_1, verifiable_set_2}. Additionally, the case of the intersection (or union) for an arbitrary number of sets is similar.

\textbf{Merkle tree.} Merkle tree~\cite{merkle1987digital} is an ADS which enables clients to verify the correctness of committed transactions. Its leaf nodes are composed of hash values of transactions, and non-leaf nodes are generated upward through hash operations until the root named \emph{Merkle root} is generated. 
Each block header stores a Merkle root for its contained transactions; thus, the clients only need to synchronize the headers to verify all committed transactions. 
A blockchain node can provide the clients with any transaction and a Merkle proof consisting of the siblings path of the transaction in the tree. 
The clients can validate the transaction by constructing a Merkle root based on the proof and comparing it with the locally stored one.

\section{System Model}
\label{sec:system_data_model}

\textbf{System Components.} In \textsc{GriDB}, there are two types of entities:

1) The \emph{database clients} are the service consumers of \textsc{GriDB}. They neither participate in the consensus nor store the whole content of blockchains locally because they are often lightweight devices.

2) The \emph{blockchain nodes} are the consensus participants for the blockchain and are divided into a number of shards. Each node is responsible for verifying, processing and storing transactions of its located shard. 
\textsc{GriDB} is a layer-2 database framework constructed on top of existing blockchain sharding systems, and it is \emph{sharding-agnostic}, which means the underlying system can adopt any sharding schemes (including shard formation, intra-shard consensus and cross-shard mechanism) from~\cite{omniledger,rapidchain,monoxide,pyramid}. 
However, the underlying blockchain sharding system's intra-shard consensus should satisfy both safety and liveness, and its cross-shard mechanism guarantees the ACID of each cross-shard transaction (see \autoref{sec:preliminary_sharding}).
We emphasize that the cross-shard mechanism of the underlying blockchain sharding system is one of the important components of the off-chain cross-shard mechanism of \textsc{GriDB}, which will be described in the following sections.
To avoid confusion, we will call the former \emph{on-chain cross-shard mechanism}.

\textsc{GriDB} considers an outsourced database scenario~\cite{survey_outsourced, survey_outsourced_2} in which the clients outsource their data management to the blockchain. The nodes host the client's databases and the clients send requests to the nodes to create, store, update and query their databases.

\textbf{Threat Model.} The threat model of \textsc{GriDB} relies on that of the underlying blockchain sharding composed of two kinds of blockchain nodes: \emph{honest} and \emph{malicious}. The honest nodes abide by all protocols in \textsc{GriDB} while malicious nodes may collude with each other and violate the protocols in arbitrary manners, such as denial of service, or tampering, forgery and interception of messages. Although there are malicious nodes in the shards, the sharding blockchains~\cite{rapidchain, omniledger, monoxide} can guarantee that each shard is trusted with high probability, i.e., the result published by any shards is trusted. 
Different from~\cite{blockchaindb}, \textsc{GriDB} does not require a strong assumption that each client trusts the nodes it connect.

\textbf{Transaction Model.} The requests of clients are processed in the form of blockchain transactions that are divided into two types: \emph{data} and \emph{query transaction}. The first one is used to update (such as insert, update, and delete) the database state and the second one is used to query the database state.
Besides, in \autoref{sec:live_migration}, there are some \emph{control transactions} used to support the database management such as database migration. 
The type division will not affect the compatibility for the underlying blockchain because there is a ``data'' field in the transactions of most blockchains, and \textsc{GriDB} places different data in the field for different types of transactions. 
The details of transactions will be described as follows. 

\section{GriDB Overview}

\subsection{System Overview}
\label{sec:system_overview}

\begin{figure}[t]
    \centering
    \includegraphics[width=\linewidth]{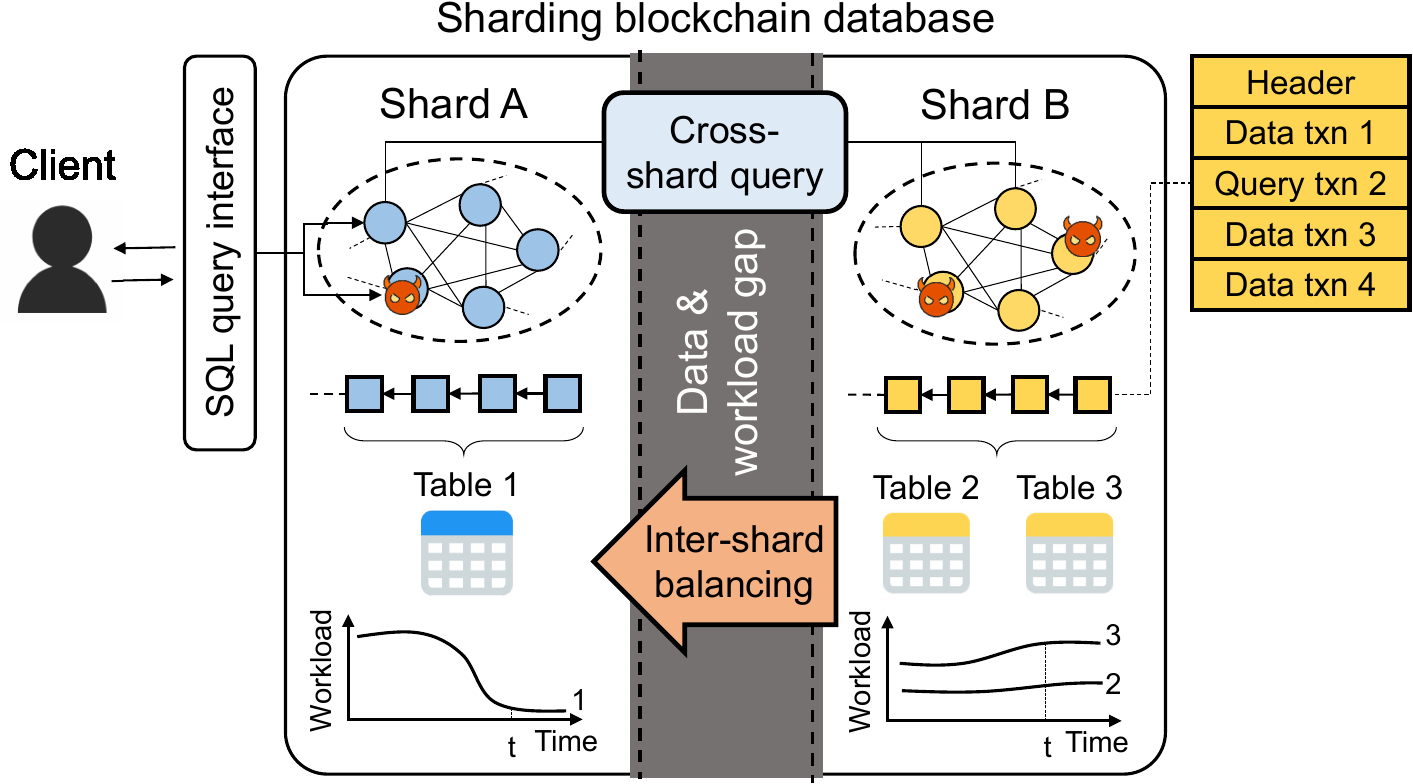}
    \caption{System overview for \textsc{GriDB}.}
    \label{fig:framework}
\end{figure}

As shown in \autoref{fig:framework},  \textsc{GriDB} considers a distributed relational database storing a number of tables. 
For scalability, the workload of each table is distributed to a shard (A fine-grained sharding for blockchain database through table partition will be described in \autoref{sec:fine_grained_sharding}.)
If a client decides to query or update a table, it can issue requests to any nodes in the shard responsible for the table. 
As described in the threat model, the nodes receiving the requests may be malicious, thus they are required to return a proof used for authentication.
To generate a proof, based on the request received, a node first proposes a transaction which can be one of the following.

1) A data transaction is used to manipulate (such as insert, update, and delete) the data and includes an \texttt{INSERT}, \texttt{DELETE} or \texttt{UPDATE} SQL statement. In \textsc{GriDB}, the data is in the form of a relational model. The same type of data has a unified semantic description as a schema composed of several attributes. 
An insert statement inserts new data based on explicitly specified values or from the existing data via a nested subquery.
Since blockchain is append-only, a delete statement is implemented by marking old data as invalid, i.e., cannot be queried, and an update one is implemented by a sequence of delete and insert operations to overwrite. 
Based on the data transactions recorded in the blockchain, each node in a shard maintains a tamper-proof copy of a relational database.

2) A query transaction is used to record query results to clients.
A query statement begins with a keyword \texttt{SELECT} followed by a subset of column names, and then a keyword \texttt{FROM} followed by a table (or a \texttt{JOIN} sub-clause used to combine multiple tables in a later section.) Following these, a \texttt{WHERE} clause is followed by a sequence of predicates connected by logical operators (e.g., \texttt{AND}, \texttt{OR}, \texttt{NOT}) that restrict the rows used when computing the output. After processing the request, the node can put the query statement and result into the data field of a query transaction. 
To avoid occupying too much on-chain storage, the query results can be offloaded to an off-chain storage and the query transactions only store the hash of query results, according to which the clients can download the correct results from the off-chain storage based one the hash.

Next, the node broadcasts the proposed transaction to the network. If a data transaction is committed to the blockchain, the majority of honest nodes accept and execute the data transaction, which means the database state has been successfully updated. 
If a query transaction is committed, the majority of honest nodes agree on the query results. Finally, the client can authenticate the result returned by its connected node by validating if the transaction for its request is committed. This transaction validation for clients has been implemented in most blockchains, such as Simplified Payment Verification (SPV) in Bitcoin and Ethereum. 

In addition to a Merkle tree storing transactions like traditional blockchains, to enable every node or client to know every table's location, every node maintains an additional Merkle tree. 
The tree stores the location of all tables in the form of \emph{table name-shard id} pairs. 
It is updated in each epoch according to a global cross-shard control transaction including a new planning strategy for the following epoch (refer to \autoref{sec:scheduler}).
Thus, depending on the tree, every node or client can find the correct shard storing the target table.

\subsection{Challenges} 
\label{sec:challenges}

Dividing the tables across different shards improves the blockchain database's scalability. 
However, it is not enough for a scalable blockchain database due to two following problems.

\textbf{Problem 1 (cross-shard query):} A client's query involving tables only in a single shard can be served by one shard because each node of the shard can validate and agree on the query result in an intra-shard consensus.
However, a request to query the tables from different shards cannot be completed by a single shard and requires the cooperation of multiple shards. For example, in \autoref{fig:framework}, a query joining on Table 1 and 2 involves the data of Shard A and B thus cannot be completed by only one of them.

\textbf{Problem 2 (inter-shard workload imbalance):} 
It is hard to guarantee that the workload of every table is the same and static, thus some shards can be overloaded while some others remain idle. For example, in \autoref{fig:framework}, Shard A is responsible for Table 1 and Shard B is responsible for Table 2 and 3. At the beginning, the total workload in Shard A and B is similar. 
However, if the workload of Table 1 drops over time, Shard A becomes idle. 
To fully utilize the throughput, dynamically migrating the workload among shards and alleviating the effect of hotspots are crucial.

\section{System Design}
\label{sec:system_design}


\subsection{Strawman}
\label{sec:strawman}

For the two challenges above, we first describe a strawman sharding blockchain database only based on the on-chain cross-shard mechanism of the existing sharding systems as follows. 

	

Consider a cross-shard query involving two tables, i.e., Table 1 and 2, located in Shard A and B, respectively. 
For such a cross-shard query, we present a \emph{shard-cooperation approach} based on the on-chain cross-shard mechanism. 
Shard A first commits many cross-shard data transactions involving Shard A and B, including the data of Table 1 via the cross-shard mechanism. 
As described in \autoref{sec:preliminary_sharding}, the mechanism guarantees the transactions can be committed in Shard A and B. 
Then, Shard B can get Table 1 from the transactions, compute the query result, and commit a query transaction with the query result.
However, when there are many cross-shard queries, the table transfer among shards will be frequent, resulting in system overloaded and network blocked.

For the inter-shard load balancing, the latest version of the table should be transferred from the source shard to the destination shard. 
If there is data being left out, the completeness of queries on the table cannot be guaranteed after migration. 
Thus, we present a \emph{stop-restart migration approach} based on the on-chain cross-shard mechanism as follows.
Shard A first stops processing new transactions for the table via an intra-shard consensus, which avoids the migrating table being modified during migration.
Then, Shard A commits many cross-shard data transactions to reconstruct the latest version of the table in Shard B. After all transactions are committed, Shard A commits a cross-shard transaction to mark the end of the migration, and Shard B can restart the service of the table.
However, when the migrating table is enormous, the approach incurs a high penalty due to the prolonged service interruption for the migrating table and the influence on other tables' throughput.

To solve these drawbacks, we introduce two key designs for our off-chain cross-shard mechanism in \autoref{sec:cross_shard_query} and \autoref{sec:live_migration}.

\subsection{Cross-Shard Query Authentication}
\label{sec:cross_shard_query}


\textbf{Motivation.} Although the above shard-cooperation approach is safe, it is expensive since the table transfer among shards for each cross-shard query is between groups of nodes (to guarantee that there is a majority of honest nodes for each shard participating). 
An intuitive idea to avoid this overhead is to pick one delegate from each shard. 
Then, for each cross-shard query, a delegate downloads the related tables from the other delegates and evaluates the query results. 
However, any malicious delegate can easily tamper with the query result by providing a fake or out-of-date table.

We aim to design an ADS to allow the delegates to prove the validity of cross-shard query results.
The existing outsourced databases have designed some ADS for SQL~\cite{survey_outsourced,survey_outsourced_2}.
For example, for a \texttt{JOIN} query involving the same column of two tables, a node treats the columns of these two tables as two sets and constructs a proof for their intersection through VSO~\cite{integridb}. 
However, the existing ADS for SQL cannot be applied in our cross-shard query due to two difficulties.
First, different from the outsourced database in which there is no sharding and a server stores the whole data copy, any delegate in \textsc{GriDB} only stores the tables of its located shard and downloads tables from the other untrusted delegates and thus cannot construct a valid proof by itself.
Second, to support arbitrary verifiable SQL queries, besides VSO, the other outsourced databases need to adopt interval trees~\cite{integridb} or zero-knowledge proof~\cite{vsql}, which costs tens of minutes for a query~\cite{vsql} due to high computation complexity. 

Thus, we propose a \emph{delegation-based approach} by integrating VSO with the intra-shard consensus to implement an efficient and secure ADS for arbitrary SQL query in \textsc{GriDB}. 
Its main idea is to divide each query into some algebra operators with different input data.
Particularly, it validates the operators involving multiple shards' data through VSO and those involving single shard's data through the intra-shard consensus.
The cross-shard query validity can finally be proved through a chain of trust, i.e., proving the validity of every operator from beginning to end.
Such a manner makes the best use of the advantage of both the shard-cooperation approach (i.e., low computation) and the existing ADS for verifiable SQL (i.e., low communication) and bypasses their disadvantage.

\textbf{Design.} The overall cross-shard query procedure is given in \autoref{alg:cross_shard_query}. For each cross-shard query, we identify the related shard of the table following \texttt{FROM} as \emph{main shard} and the shards of the tables following \texttt{JOIN} as \emph{sub shards}.
A client can issue a cross-shard query request to any nodes in the main shard.
Next, one node is chosen from each related shard (Line 2), which can be round-robin or randomly by a verifiable random function~\cite{algorand}. 
The malicious or low-response delegates can be replaced by a view change similar to PBFT.
The delegated node in the main shard is called the \emph{main node} denoted by $\mathbb{M}$ and those in the sub shards are called \emph{sub nodes} denoted by $\mathbb{S}$.
The main node downloads each involved table for the query from the sub node in the corresponding sub shards (Line 3).
After downloading all involved tables, the main node evaluates the query result and generates a proof (Lines 4, 8-14).

To generate the proof, the main node first translates each SQL query into a relational algebra
tree composed of algebra operators~\cite{db_book}, e.g., the right part of \autoref{fig:query_evaluation}.
Each node in the tree denotes a unary (or binary) algebra operator taking one (or two) inputs, applying a function, and outputting its result to the next operator. The edges represent data flow from bottom to top.

\begin{algorithm}[t]
    \SetKwInput{KwIn}{Input}
    \SetKwInput{KwOut}{Output}
    \SetKwFunction{FProof}{genProof}
    \SetKwFunction{FCX}{validateCx}
    \SetKwProg{Fn}{Function}{:}{}
	\caption{Cross-Shard Query Authentication}
    \label{alg:cross_shard_query}
        \KwIn{query request $Q$ involving tables in a set of shards $S$}
        \KwOut{query result $R$, verification object $VO$}
        Delegates $\mathbb{M}$ and $\mathbb{S}$ are selected from $S$\\
        $\mathbb{M}$ downloads the related tables from $\mathbb{S}$\\
        $\mathbb{M}$ evaluates query result $R$ and get proof $\Upsilon$ via $\textsf{genProof}(Q)$\\
        $\mathbb{M}$ proposes a cross-shard query transaction $txn$ involving $S$ and including $R$ and $\Upsilon$\\
        \If{\FCX{$S, txn$} == True}
        {
            $VO \leftarrow $ the list of SPV proofs in $S$ for $txn$
        }
        \Fn{\FProof{$Q$}}{
            \For{\textup{cross-shard operator} $op \in Q$}{
                Set $C^i$ and $C^j$ as the columns involved by $op$ and $pk$ as the public key\\
                $(C^{*}, \pi) \leftarrow \mathrm{prove}(C^i, C^j, pk)$\\
                Get $bm^i$ and $bm^j$ based on $C^i$, $C^j$ and $C^*$\\
                Add $\langle acc(C^i),acc(C^j), \pi, bm^i, bm^j \rangle$ to $\Upsilon$
            }
            return $\Upsilon$
        }
        \Fn{\FCX{$S, txn$}}{
            \For{\textup{shard} $s \in S$}{
                \If{$\Upsilon$ or $R$ \textup{is invalid}}{
                    return $False$
                }
                $txn$ is committed in the blockchain of $s$
            }
            return $True$
         }
\end{algorithm}

In the tree, we identify join (or union) operators involving tables in different shards as \emph{cross-shard operators} and the others as \emph{intra-shard operators}. 
Each intra-shard operator can be processed by the nodes of the corresponding shard based on their stored tables. 
In comparison, each cross-shard operator involves the data gap among shards, thus the main node needs to generate a proof. 
The proof is composed of the accumulation values (refer to \autoref{sec:preliminary_vso}) of the corresponding columns in the tables to be joined or unioned, a VSO proof, and a position indicator for the intermediate result (or final result). 
The position indicator is a bitmap to indicate which rows are chosen in a table. For example, considering the SQL statement in \autoref{fig:query_evaluation}, 
we denote the oid columns of these two tables after processing the selection operators as $C^i$ and $C^j$, respectively. 
The generated proof $\Upsilon$ is $\langle acc(C^i),acc(C^j), \pi, [1,0,1], [1,0,0]\rangle$. 
The position indicators $[1,0,1]$ and $[1,0,0]$ mean that the first and third rows in Table 1 and the first row in Table 2 are chosen, respectively. A cross-shard query may include multiple cross-shard operators thus the main node will produce a list of proofs  $\Upsilon$, each of which is for a cross-shard operator.

After the query result and the corresponding proof are generated, the main node proposes a cross-shard query transaction, including the result and the proofs and involving the related shards (Line 5). 
Then, to validate the cross-shard query, each related shard runs an intra-shard consensus on the transaction by evaluating each algebra operator for their stored tables and verifying the proofs related to the tables of the shard (Lines 15-20). 
For example, as shown in \autoref{fig:query_evaluation}-\ding{184}, during the consensus on the cross-shard query transaction, each node can validate and execute intra-shard operators based on the local data and validate and execute cross-shard operators based on the VSO proof.
During the validation, they can optimistically assume that the accumulation values related to the other shards' tables are valid.
Finally, if the cross-shard query transaction passes the validation of every related shard (Line 6), it will be committed in the blockchains of all related shards and the client can accept the query result included in the transaction via SPV (Line 7).
Besides, if a malicious main node sends different copies of a transaction to shards, the client can detect the inconsistency by checking the Merkle proofs of the transaction via SPV (Line 7).

\begin{figure}[t]
    \centering
    \includegraphics[width=\linewidth]{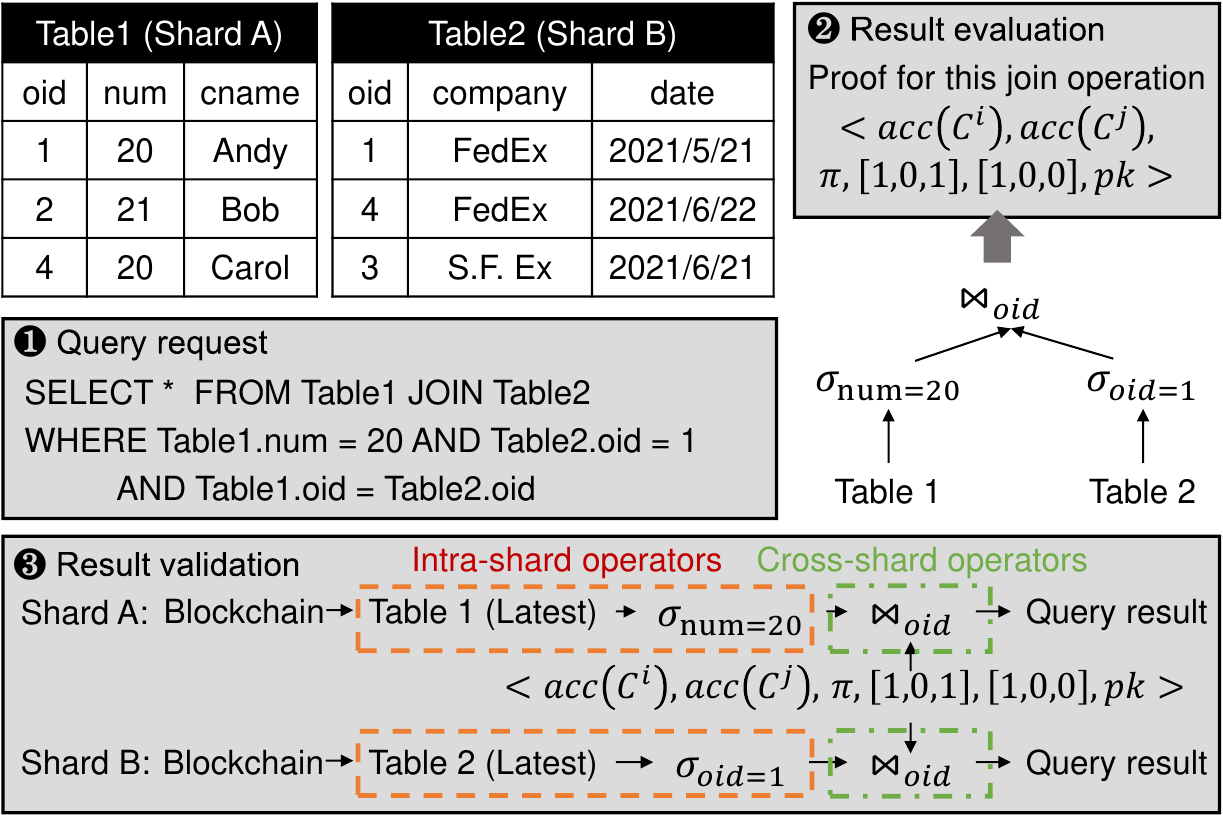}
    \caption{Example for ADS proof generation of two tables distributed in Shard A and B, respectively. ($\sigma$ is an operator to select rows from a relation and $\Join$ is an operator to join tables based on a specified column.)}
    \label{fig:query_evaluation}
\end{figure}


\textbf{Security Analysis.} The analysis relies on the intra-shard consensus of blockchain sharding thus we define $v$ as the fault threshold~\cite{multi_threshold} of the adopted blockchain sharding in \textsc{GriDB}. 
For example, Rapidchain~\cite{rapidchain} tolerates up to $v=1/2$ Byzantine faults, while the asynchronous or Omniledger~\cite{omniledger} tolerates only up to $v=1/3$ Byzantine faults. 
Next, we describe the formal definition~\cite{vsql,integridb} of our cross-shard query's security as follows.
\begin{definition}
A query is secure if any polynomial-time adversary's success probability is negligible in the following experiment:

For a query $q$, the adversary is picked as main node or sub node for the generation of query transaction including result $R$. The adversary succeeds if the query transaction is committed in all related shards and one of the following results is true: 1) $R$ includes a row which does not satisfy $q$ (\textbf{correctness}); 2) There exist a row which is not in $R$ but satisfies $q$ (\textbf{completeness}); 3) R includes a row not from the latest tables generated by all the committed data transactions (\textbf{freshness}).
\end{definition}

\begin{theorem}
Our proposed cross-shard query mechanism satisfies the security property as defined in Definition 1 if the proportion of malicious nodes in each shard is no more than the fault threshold $v$.
\end{theorem}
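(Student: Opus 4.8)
The plan is to show that if the cross-shard query mechanism violates any of the three security properties (correctness, completeness, freshness) with non-negligible probability, then either the intra-shard consensus of some shard has been broken (contradicting the $v$-fault-threshold assumption, which guarantees each shard publishes only trusted results), or the adversary has forged a VSO proof (contradicting the unforgeability of VSO under the q-SBDH assumption, as cited in \autoref{sec:preliminary_vso}). I would structure the argument around the ``chain of trust'' over the relational algebra tree: the query transaction is committed only if it passes the intra-shard consensus of every related shard (Lines 15--20 of \autoref{alg:cross_shard_query}), so I would argue inductively, operator by operator from the leaves to the root, that the intermediate result at each node of the tree is correct, complete, and fresh.

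First I would handle the base case and the \emph{intra-shard operators}. For any operator whose input tables reside entirely within one shard $s$, the honest majority in $s$ (guaranteed since the malicious proportion is at most $v$) re-evaluates the operator on its locally stored tables during consensus; by the safety of the intra-shard consensus the transaction is committed only if the operator's output matches this honest re-evaluation, and since each node maintains a tamper-proof copy of its database derived from the committed data transactions (see \autoref{sec:system_overview}), that output is correct, complete, and fresh. Freshness in particular follows because an honest node's local table reflects exactly the latest committed data transactions; if $\mathbb{M}$ downloaded a stale table from a sub node, the re-evaluation in the corresponding sub shard would detect the mismatch and the transaction would be rejected. Then, for the \emph{inductive step} on a \emph{cross-shard operator} (a join or union spanning two shards), the proof $\Upsilon$ carries the accumulation values $acc(C^i), acc(C^j)$ of the two joined columns, a VSO proof $\pi$, and the position-indicator bitmaps $bm^i, bm^j$. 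During consensus, the shard owning $C^i$ verifies that $acc(C^i)$ is the genuine accumulation of its (inductively correct) column, and symmetrically for the other shard; then $\mathrm{verify}$ is run on $(acc(C^i), acc(C^j), C^*, \pi, pk)$, which by the soundness of VSO accepts only if $C^*$ is exactly the true intersection (or union). The bitmaps are checked against $C^i$, $C^j$, and $C^*$ to pin down which rows of each operand survive, so the materialized intermediate relation is forced to be the genuine one. Hence the inductive invariant propagates past every cross-shard operator, and at the root the final result $R$ is correct, complete, and fresh --- contradicting the assumption that the adversary succeeded.

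The main obstacle I anticipate is the reduction at the cross-shard operator in its full generality: VSO as described verifies a single set intersection/union, but a SQL \texttt{JOIN} produces row multiplicities and carries along the non-key columns, so I need to argue carefully that verifying the key-column set operation together with the position-indicator bitmaps is sufficient to authenticate the entire joined relation (including that no spurious rows are injected and no matching rows are dropped), and that this composes correctly when a query contains several cross-shard operators in sequence, each feeding an intermediate result into the next. I would also need to be careful about the freshness argument when $\mathbb{M}$ is malicious and the downloaded table is only \emph{partially} stale, and about the equivocation case where $\mathbb{M}$ sends different transaction copies to different shards --- here I would invoke the SPV check noted after Line 7, since a client detecting inconsistent Merkle proofs rejects, so a successful adversary must have committed one consistent transaction in all shards, reducing back to the single-transaction analysis above. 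Finally I would note the union of the (negligible) failure probabilities --- consensus safety failure in any shard plus VSO forgery --- remains negligible, completing the argument.
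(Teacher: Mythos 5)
Your proposal is correct and takes essentially the same route as the paper's own proof: both reduce any violation of correctness, completeness, or freshness to the invalid execution of some relational operator, and then derive a contradiction either with the safety of intra-shard consensus (for intra-shard operators, under the fault threshold $v$) or with the unforgeability of VSO (for cross-shard operators). Your explicit induction over the operator tree and the caveats you flag (join multiplicities and non-key columns versus plain set intersection, composition of successive cross-shard operators, equivocation handled via SPV) are a more careful elaboration of the same argument --- the paper's proof states the reduction at a higher level and does not address those subtleties.
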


\begin{proof}

We prove \textsc{Theorem 1} in three cases, corresponding to how \textsc{GriDB} defends against the three different adversaries in \textsc{Definition 1} for each cross-shard query in correctness, completeness, and freshness. 
We first describe the three cases:
\underline{Case 1}: This case means a tampered or fake row within the result is returned, which does not satisfy the query $q$. In this case, the tampered or fake row can pass the client's verification under the correctness in \textsc{Definition 1}. 
\underline{Case 2}: This case means a row that satisfies $q$ is missing from $R$. In this case, the incomplete result can pass the verification of the client under the completeness in \textsc{Definition 1}.
\underline{Case 3}: This case means the result $R$ involves an old row that satisfies $q$ but is not from the latest tables. In this case, the old result can pass the client's verification under the freshness in \textsc{Definition 1}.

If any of the above three cases occur, it means the computation of at least one relational operator (intra-shard or cross-shard operator) for a committed query is invalid, i.e., the malicious nodes in a related shard tamper with the executing of intra-shard operators during intra-shard consensus, or the main node generates a wrong result in the executing of cross-shard operators. 
However, this contradicts two assumptions. The first one is that when the proportion of malicious nodes in each shard is no more than the fault threshold $v$, the safety of the intra-shard consensus holds~\cite{sok_sharding}. Second, according to the unforgeability of VSO under the q-SDH assumption~\cite{verifiable_set_1, verifiable_set_2}, the ADS for set operations guarantees that the computation of each cross-shard operator in delegates is valid, and any invalid results can be detected by the intra-shard consensus.
\end{proof}


\textbf{Performance Analysis.} We analyze the time for a cross-shard query involving $m$ cross-shard operators as follows. 
Three steps occupy most of the time, i.e., the table transfer (Line 3), the proof generation (Line 4), and the confirmation latency of the query transaction (Line 6), which is also proved in \autoref{sec:experiment}. 
Thus, the analysis is developed around these three steps. 
For the table transfer, the time cost is linear to the size of the related tables.
We introduce several refinements to reduce this time cost and improve query efficiency in \autoref{sec:query_efficiency}.
Next, according to~\cite{verifiable_set_1,verifiable_set_2}, the proof generation time for each set operation involving $N$ elements is $O(N\log^2N\log\log N)$. Thus, the proof generation time is $O(mN\log^2N\log\log N)$. Finally, the confirmation latency denotes the delay between the time that the query transaction is issued from the main node until the transaction is committed, which depends on the throughput, demand, and number of block confirmations of the blockchain.

\subsection{Inter-Shard Load Balancing}
\label{sec:live_migration}

\textbf{Motivation.} Observe that the drawback of the stop-restart approach in the strawman system results from the interruption for transaction processing during migration. Moreover, because the approach is on-chain, the migration occupies the transaction throughput of the shards involved, which interrupts the new transactions of the other tables.
Thus, to avoid these drawbacks, we design an \emph{off-chain live migration} approach for \textsc{GriDB}.
Its main idea is to design an off-chain technique to minimize the number of on-chain transactions and a dual mode with cross-shard synchronization and concurrency control to minimize the impact of interruption to the migrating table during migration. 

\textbf{Design.} \autoref{fig:migration} illustrates the timeline of cross-shard migration and the messages exchanged between two shards. The life cycle of a table includes the following three modes. 


\textit{1) Normal Mode:} The normal mode for a table  (called $\mathcal{T}$) is the period in which the data or query transactions of the table are processed normally by the shard it belongs to. 
The normal mode accounts for most of the time for the table. 

\begin{figure}[t]
    \centering
    \includegraphics[width=\linewidth]{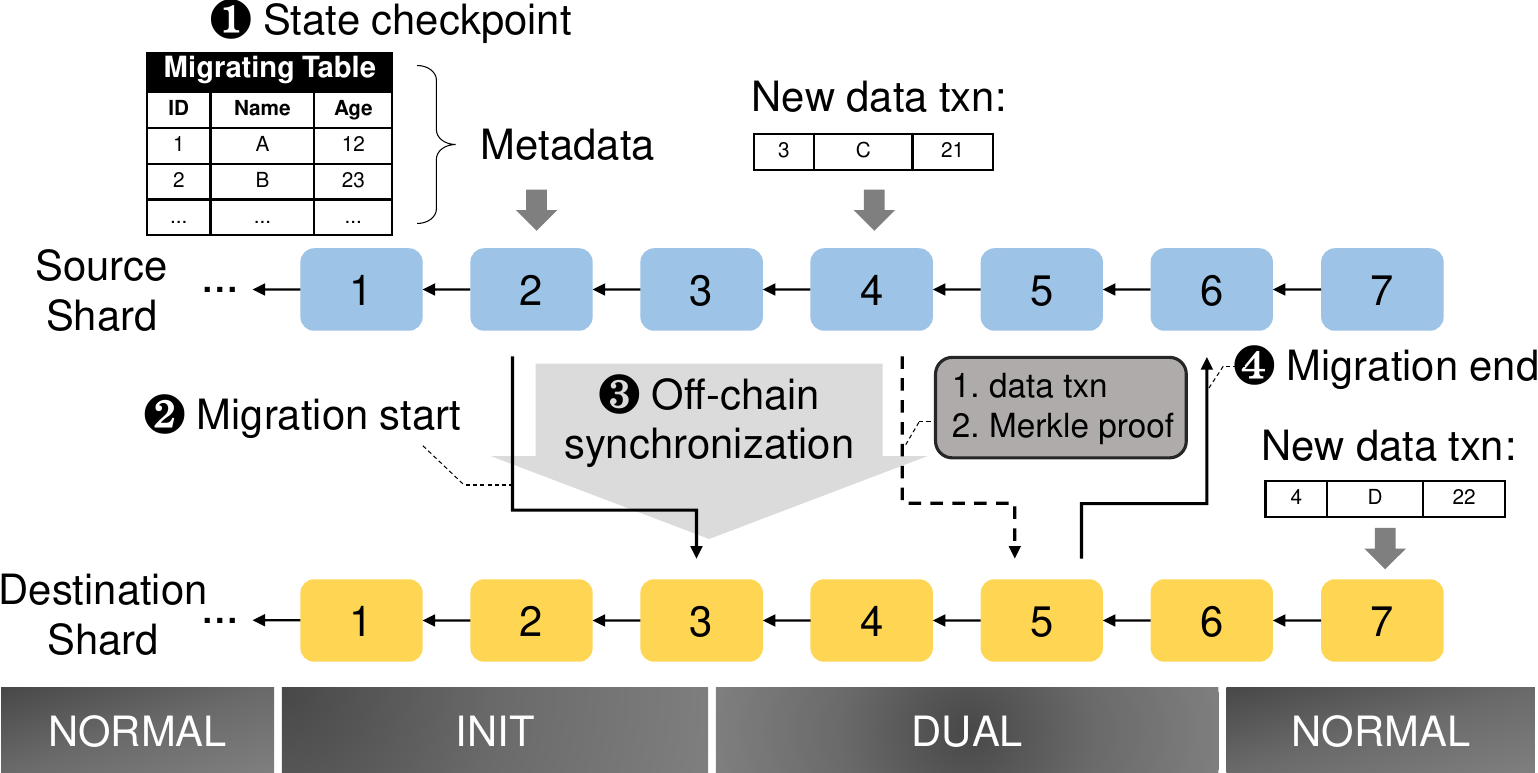}
    \caption{Overview of off-chain live migration. A solid line with arrowhead represents a cross-shard transaction and a dotted line with arrowhead represents an off-chain cross-shard communication.}
    \label{fig:migration}
\end{figure}

\textit{2) Init Mode:} When $\mathcal{T}$ is going to be migrated from the source shard (called $\mathcal{S}$) to the destination shard (called $\mathcal{D}$), the init mode starts. 
(We will discuss the trigger of table migration in \autoref{sec:scheduler} which aims for load balancing and guarantees that there is a super majority of honest nodes in $\mathcal{S}$ knowing $\mathcal{T}$ and $\mathcal{D}$.)
The nodes in $\mathcal{S}$ first construct the metadata for $\mathcal{T}$ via a hash function such as SHA (\autoref{fig:migration}-\ding{182}) and commit a cross-shard control transaction involving $\mathcal{S}$ and $\mathcal{D}$ (\autoref{fig:migration}-\ding{183}). 
The transaction includes the metadata and a block number, representing a checkpoint for $\mathcal{T}$ in this block number. 
When the control transaction is committed in both shards by the on-chain cross-shard mechanism, the init mode ends. 

\textit{3) Dual Mode:} In the dual mode, $\mathcal{S}$ begins to transmit $\mathcal{T}$ to $\mathcal{D}$. The transmission among shards is pluggable and can be implemented by one-to-one communication or gossip mechanisms (\autoref{fig:migration}-\ding{184}). The nodes in $\mathcal{D}$ only accept the table matching the metadata in the control transaction. Because the download of the whole table may cost a lot of time, we adopt a pre-copy scheme in which the nodes in $\mathcal{D}$ can pre-download $\mathcal{T}$ from $\mathcal{S}$ in the normal or init mode and validate it after the commitment of the control transaction. 

To keep the service for $\mathcal{T}$ during the dual mode, $\mathcal{S}$ continues to process the newcoming data and query transactions related to $\mathcal{T}$. The new data transactions in $\mathcal{S}$ may change the content of $\mathcal{T}$, thus $\mathcal{D}$ should be notified. It can be realized by committing all new data transactions as cross-shard transactions, however, which slows the service of $\mathcal{T}$ due to the overhead of cross-shard mechanism and blocks the throughput of $\mathcal{S}$ and $\mathcal{D}$ when the demand is high. Thus, we adopt an off-chain cross-shard notification mechanism based on Merkle tree as follows. First, in \textsc{GriDB}, similar to the other sharding systems \cite{monoxide,pyramid}, each node will be a light node for the other shards and store the block headers of all shards. It does not hurt the scalability, since the light nodes do not need to participate in the consensus and each header occupies little storage space and bandwidth. The notification is in the form of an off-chain message including a data transaction and its Merkle proof. 
Any nodes in $\mathcal{S}$ can notify $\mathcal{D}$ via gossip mechanism~\cite{gossip}. Based on the notification received, $\mathcal{D}$ gets the latest data transactions for $\mathcal{T}$.
For example, as shown in \autoref{fig:migration}, a new data transaction for the migrating table arrives in $\mathcal{S}$ and is committed at the 4-th block. Any honest node in $\mathcal{S}$ can send the new transaction with its Merkle proof in the 4-th block to $\mathcal{D}$ for synchronization of $\mathcal{T}$ between $\mathcal{S}$ and $\mathcal{D}$.

After a node in $\mathcal{D}$ completes downloading, it proposes a cross-shard control transaction involving $\mathcal{D}$ and $\mathcal{S}$ or participates in the consensus on the one proposed by another node to show that it has downloaded the table successfully. Thus, the transaction can be committed if the majority of honest nodes in $\mathcal{D}$ confirm that they have downloaded the migrating table (\autoref{fig:migration}-\ding{185}). We assume that the off-chain notification arrives reliably and without latency here, which will be discussed later. Finally, the migration is completed and $\mathcal{D}$ has full ownership of the migrating table. It means that the later transactions (e.g., data/query transactions and migration requests) for the migrating table are processed by $\mathcal{D}$ only.

\textbf{Asynchronous Issues.} In the above, we ignore some problems resulting from the network latency or malicious nodes. Thus, we discuss them and provide some designs as follows.

\textit{Problem 1:} In the init mode, due to the transaction latency existing in the blockchain, i.e., the delay between the time that a node sends a transaction to the network until the time that the transaction can be confirmed by all (honest) nodes, the metadata generated by different nodes may be in different versions. Thus, $\mathcal{S}$ may be unable to reach a consensus on the same control transaction.
To synchronize the metadata among nodes in $\mathcal{S}$, \textsc{GriDB} sets a rule as follows. 
When a node begins to generate the metadata, it stops processing any new data transactions of $\mathcal{T}$ and disagree on blocks including these transactions during consensus until the init mode ends. 
Note that a node still accepts the newly committed blocks and updates its local database state and the corresponding metadata even if it disagrees them.
Moreover, before a cross-shard control transaction is committed successfully, the nodes in $\mathcal{S}$ keep updating the metadata they generate based on the new block. If there is already the same control transaction proposed by other nodes waiting to be committed, the node can participate in its consensus. 

\textit{Problem 2:} In the dual mode, we adopt an off-chain notification mechanism to minimize the impact of interruption during migration. However, the off-chain communication among shards is not reliable thus the notifications may get lost. For the problem, we adopt the following designs. First, every new data transaction in the dual mode will be assigned an increasing sequence number before being committed in $\mathcal{S}$. Thus, if a node in $\mathcal{D}$ finds itself missing some transactions, it can directly request the corresponding notifications from the nodes in $\mathcal{S}$. Second, after the control transaction is committed (\autoref{fig:migration}-\ding{185}), $\mathcal{S}$ needs to commit a control transaction including the total number of new data transactions in the dual mode and sends the control transaction with a Merkle proof to $\mathcal{D}$. Besides, the nodes in $\mathcal{D}$ can actively ask the control transaction. Each node in $\mathcal{D}$ begins to process new transactions for $\mathcal{T}$ until it gets the total number of notifications and downloads all data transactions. Finally, $\mathcal{D}$ continues the service of $\mathcal{T}$ when the majority of honest nodes in $\mathcal{D}$ finish downloading.

\textbf{Security Analysis.} We first describe the formal definition of the security~\cite{zephyr} for our off-chain live migration as follows.

\begin{definition}
A migration is secure if achieving safety and liveness despite Byzantine failure. The safety requires serializable isolation, i.e., the migrating table's transactions run in serial order during migration, and durability, i.e., the committed transactions will not get lost after migration. The liveness indicates it eventually terminates. 
\end{definition}

\begin{theorem}
Our proposed off-chain live migration satisfies the security property as defined in Definition 2 if the proportion of malicious nodes in each shard is no more than the fault threshold $v$.
\end{theorem}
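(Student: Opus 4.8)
The plan is to split Definition~2 into its three obligations --- serializable isolation, durability, and liveness --- and to discharge each by reducing it to a property the excerpt already grants: safety and liveness of the intra-shard consensus under threshold $v$, the ACID guarantee and liveness of the on-chain cross-shard mechanism, collision resistance of the hash used to build the metadata, and the scheduler invariant (\autoref{sec:scheduler}) that a super-majority of honest nodes in $\mathcal{S}$ agree on $\mathcal{T}$, $\mathcal{D}$, and the checkpoint. As in the proof of \textsc{Theorem 1}, the argument is essentially that any violation of Definition~2 forces one of these assumptions to fail.

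First, for \textbf{serializable isolation} I would observe that ownership of $\mathcal{T}$ is transferred by a single, totally ordered pair of committed cross-shard control transactions: the init-mode transaction (\autoref{fig:migration}-\ding{183}) and the completion transaction (\autoref{fig:migration}-\ding{185}). Because the intra-shard consensus of $\mathcal{S}$ and $\mathcal{D}$ is safe, all honest nodes in each shard see the same block sequence, so these control transactions occupy well-defined heights and cut the life cycle of $\mathcal{T}$ into a prefix served exclusively by $\mathcal{S}$ and a suffix served exclusively by $\mathcal{D}$; the Problem~1 rule makes $\mathcal{S}$ freeze $\mathcal{T}$'s data transactions throughout the init mode, and the Problem~2 rule makes $\mathcal{D}$ withhold service until it has replayed everything, so no block applies a data transaction to $\mathcal{T}$ in both shards. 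Within each shard the consensus already totally orders transactions, and the monotone sequence numbers attached to dual-mode data transactions force $\mathcal{D}$ to replay them in exactly the order $\mathcal{S}$ committed them; concatenating the two totally ordered segments yields one serial schedule, which is serializable isolation.

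Second, for \textbf{durability} I would show $\mathcal{D}$ reconstructs the latest state of $\mathcal{T}$ without loss. The checkpoint metadata committed in the init mode pins, by collision resistance of the hash, the exact contents of $\mathcal{T}$ at the checkpoint block, and $\mathcal{D}$ accepts the pre-copied table only if it matches; every later change is a dual-mode data transaction that is committed in $\mathcal{S}$ (hence carries a valid Merkle proof) and receives a contiguous sequence number, while the post-completion control transaction of the Problem~2 fix publishes their exact total count. Thus $\mathcal{D}$ can detect any gap; since a majority of $\mathcal{S}$'s nodes are honest and hold those committed transactions with proofs, $\mathcal{D}$ can re-request and obtain the missing ones, and a malicious delegate cannot trigger an early handoff because \ding{185} requires a majority of honest $\mathcal{D}$-nodes to certify a complete download. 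Hence no committed transaction on $\mathcal{T}$ is lost across the migration.

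For \textbf{liveness} I would chain the available liveness guarantees: the init mode terminates because the on-chain cross-shard mechanism eventually commits its control transaction (using the scheduler invariant so that enough honest $\mathcal{S}$-nodes share $\mathcal{T}$, $\mathcal{D}$, and the same metadata, with the Problem~1 synchronization rule preventing perpetual disagreement), and the dual mode terminates because $\mathcal{T}$ is finite and only finitely many data transactions can arrive before $\mathcal{S}$ commits the total-count transaction, each re-requestable from an honest node, so $\mathcal{D}$ finishes downloading and \ding{185} is eventually committed by cross-shard liveness. I expect the main obstacle to be exactly this liveness/durability step under the \emph{unreliable} off-chain notification channel: one must rule out $\mathcal{D}$ being forever one transaction short (because malicious $\mathcal{S}$-nodes drop or forge notifications, or new transactions keep arriving). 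The resolution rests on three ingredients already built into the design --- monotone sequence numbers plus the committed total-count transaction give $\mathcal{D}$ a precise, verifiable target; the majority-honest assumption guarantees at least one correct responder in $\mathcal{S}$; and the freeze/handoff rules bound the number of outstanding transactions --- and the bulk of the proof effort goes into making this ``eventually synchronized'' claim rigorous.
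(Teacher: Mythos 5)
Your proposal is correct and follows essentially the same route as the paper's proof: the same decomposition into serializable isolation (exclusive ownership plus intra-shard total order), durability (metadata checkpoint for the pre-dual-mode state and the total-count control transaction for dual-mode transactions), and liveness (reduction to the liveness of the cross-shard control transactions under the honest-majority threshold). Your version simply spells out more of the supporting machinery (hash collision resistance, sequence numbers, re-request from honest nodes) that the paper's terser proof leaves implicit in its ``Asynchronous Issues'' discussion.
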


\begin{proof}
During the migration, only one of $\mathcal{S}$ and $\mathcal{D}$ has full ownership and processes transactions for the migrating table. The intra-shard consensus guarantees that there is a serializable order for transactions among nodes in each shard, thus achieving serializable isolation. For durability, in the init mode, because the honest nodes update their metadata before the control transaction is committed, $\mathcal{D}$ can download all data for $\mathcal{T}$ that are committed before the dual mode begins. The durability for transactions that are committed in the dual mode can be guaranteed by the final control transaction, including their total number in $\mathcal{S}$. Furthermore, the liveness can be guaranteed since the end conditions for each mode depend on the commitment of cross-shard transactions whose liveness has been shown to hold under the threat model of super-majority honest in each shard in any sharding works~\cite{sok_sharding}.
\end{proof}

If the malicious nodes in $\mathcal{S}$ construct a wrong metadata for $\mathcal{T}$, the intra-shard consensus guarantees that the invalid transaction including the wrong metadata would not be committed. The nodes in $\mathcal{D}$ can detect wrong tables and wrong notifications according to the on-chain metadata and Merkle root, respectively. 

\textbf{Performance Analysis.} 
The time for each migration is at least the latency of two cross-shard control transactions, one of which denotes the beginning of dual mode and the other the end. 
In parallel with the first one, $\mathcal{T}$ is transferred between shards. 
Its time depends on the adopted communication methods, network status, and network scale.
After the second one, to guarantee that all data transactions are received by $\mathcal{D}$, there is a control transaction in $\mathcal{S}$ and the nodes in $\mathcal{D}$ need to download the data transactions that they miss. The time of the step also depends on the network environment. In the worst case, if all data transactions during migration are missed by the majority of honest nodes, the nodes may need some time to download the transactions they missed. Besides, the service of $\mathcal{T}$ may be halted for a time due to the first and third designs for the asynchronous issues during migration.

\section{Design Refinement}
\label{sec:design_refinement}

\subsection{Cross-shard Query Efficiency}
\label{sec:query_efficiency}

Although the delegation-based approach in \autoref{sec:cross_shard_query} reduces the complexity for table transfer in a cross-shard query from $O(SN^2)$ (derived from the strawman in \autoref{sec:strawman}) to $O(S)$ where $N$ is the number of nodes in a shard and $S$ is the number of related shards for the query, transferring a huge table from the sub nodes to the main node still costs a lot. 
However, many rows are useless in practice. 
For example, for query \#5 in \autoref{sec:experiment_cross_shard} involving tables with millions of rows, the size of its final result only have single-digit items. 
\textsc{GriDB} optimizes the transferring of table among delegates as follows.

We first optimize by applying the unary operators and binary operators involving tables in the same shard early in the tree of each cross-shard query.
Particularly, each sub node processes all selection operators related to its table and transfers the processed table to the main node. 
For example, in \autoref{fig:query_evaluation}, the selection operation $\sigma_{num} = 1$ is moved to the bottom of the tree and processed by the sub node in Shard A, reducing the size of Table 1 to be transferred to the main node in Shard B.
Because the execution order of the tree is bottom-up, the main node in Shard B downloads the part of tables including these temporary outputs, based on which it can continue to process the next operation. Moreover, some projection operators also can be applied early similar to the selection operators.


Next, we adopt bloom filter (BF) for each cross-shard operator to filter out unnecessary data before transferring tables. 
BF~\cite{distributed2009ramesh} is a space-efficient probabilistic data structure used to test whether an element belongs to a set or not. In \textsc{GriDB}, before downloading the tables for a cross-shard operator, the main node can build a BF for the target column in its table and send the filter to the sub nodes. 
The sub nodes use it to filter their own table before transferring tables to the main node. 
Thus, most useless data are filtered out before being transmitted, reducing communication overhead.


\subsection{Load Balancing Scheduler}
\label{sec:scheduler}

A critical problem to achieve inter-shard balancing is how to generate a good planning strategy to distribute the load to shards and how to apply the strategy in a distributed and safe manner in \textsc{GriDB}. 

For a planning strategy, similar to distributed databases~\cite{squall, live_reconf_fast_tx}, \textsc{GriDB} follows a widely-used greedy planning algorithm~\cite{estore}.
It iterates through the list of tables, starting with the one with the hottest demand. If the shard currently holding this table has a load exceeding the average demand, the algorithm migrates the table to the least load shard. 
The algorithm is easy to implement and has been proved to be efficient in many database scenarios~\cite{estore, squall, live_reconf_fast_tx}. 

To run the above algorithm in a decentralized manner, \textsc{GriDB} extends it by considering its execution in the existing sharding blockchains. Particularly, resharding phase is an important phase during the lifecycle of sharding blockchains~\cite{elastico, omniledger, rapidchain}. A sharding blockchain proceeds in epochs, where each epoch consists of a resharding phase followed by multiple intra-consensus rounds. 
During the resharding phase of an epoch, a shard will be elected as the reference shard based on a round-robin rule. 
In \textsc{GriDB}, the reference shard can act as the load balancing scheduler in the resharding phase. 
The leader of every shard computes the demand of each table and reports it to the reference shard in a cross-shard control transaction via the cross-shard mechanism.
After receiving the demand of every table, the leader of the reference shard proposes a cross-shard control transaction, involving all shards and including a new planning strategy in the following epoch, via the cross-shard mechanism. 
The cross-shard mechanism can guarantee that the new planning strategy is known by a majority of honest nodes in every shard.
Based on this strategy, the tables can be migrated among shards using the approach in \autoref{sec:live_migration}.

\textbf{Security Discussion.} The greedy planning algorithm and the table demand computing can be deterministic, thus any node can check the validity of their results.
This guarantees that only the cross-shard control transactions, including valid table demand or valid planning strategy, can be committed in all the related shards and the invalid ones will be aborted via the cross-shard mechanism.

\subsection{Cross-shard Insertion/Deletion/Update.} 

In \textsc{GriDB}, a data transaction can include a SQL statement for an insert, delete or update operation with nested subqueries or a multiple-table delete/update operation~\cite{mysql_ref}. If the nested subquery is cross-shard in the first case or the related tables belong to multiple shards in the second case, the data transaction involves multiple shards. 

For the first case, a data transaction including the cross-shard subquery result (or its hash) can be processed by the delegation-based approach in \autoref{sec:cross_shard_query} and committed as a cross-shard transaction. The transaction involves both the shard for the inserted/deleted/updated table and the related shards for the nested subquery. 
For the second case, a multi-table deletion/update can be considered as deleting/updating the specified rows in multiple tables based on a query to these related tables. Thus, it can also be processed as a cross-shard data transaction including query results similar to the first case.
Finally, because the cross-shard mechanism guarantees the atomicity of cross-shard transactions (see \autoref{sec:preliminary_sharding}), the cross-shard insertion, deletion, and update take effect in all related shards.
Any invalid cross-shard data transactions (e.g., including wrong subquery results) will be aborted by the cross-shard mechanism.


\subsection{Horizontal/Vertical Table Partition}
\label{sec:fine_grained_sharding}

For each table, besides storing the entire table in one shard as discussed in \autoref{sec:system_overview}, \textsc{GriDB} can be developed into a fine-grained sharding blockchain database through \emph{horizontally} or \emph{vertically} partitioning the table into partitions. 
The former allows the table to be partitioned into disjoint sets of rows and the latter disjoint sets of columns. 
Load balancing can benefit from this fine-grained sharding for blockchain database since the database workload can be more evenly distributed to the blockchain shards. 

The partitions of each table are distributed to different shards, thus a table is stored in multiple shards. 
For a horizontally partitioned table, each query needs to commit the same query transaction to all the related shards of the query. 
For a vertically partitioned table, each of its partitions can be regarded as an individual table. 
If a query involves the columns within a partition or the partitions related to the same shard, the query involves one shard and can be processed as a query transaction in the shard. 
However, if the query involves multiple columns of several partitions from different shards, it needs to be committed as a cross-shard query transaction. 






\section{Discussion}

\subsection{Permissioned and Permissionless Setting}

\textsc{GriDB} can be applied in both permissioned and permissionless scenarios, relying on the underlying blockchain sharding system. 
For a permissioned scenario, only a set of known, identified, but untrusted nodes can serve as blockchain nodes similar to the permissioned blockchain databases~\cite{blockchaindb,falcondb}. 
For a permissionless scenario, the blockchain database is public and open, and anyone can become a blockchain node without a specific identity. 
To resist Sybil attacks caused by the permissionless setting, \textsc{GriDB} can use a PoW-based identity generation as described in \autoref{sec:preliminary_sharding}, which is widely adopted by the permissionless blockchain sharding~\cite{omniledger,rapidchain}. 
Moreover, to compensate for the consensus overhead of blockchain nodes and avoid the Verifier Dilemma~\cite{10.1145/2810103.2813659}, \textsc{GriDB} will explicitly charge fee for each transaction and reward the blockchain nodes~\cite{10.14778/3329772.3329775}. 
We leave an incentive mechanism design for \textsc{GriDB} as our future work.

\subsection{General Join}

The cross-shard query authentication in \autoref{sec:cross_shard_query} works for equality join, because the cryptography primitive adopted in \textsc{GriDB} supports set intersection only.
For a general join case such as non-equijoin (i.e., join operation using comparison operator like $>$, $<$, $>=$, $<=$ with conditions), we can resort to cryptographic technologies with more general verifiable computing capacity, e.g., Trusted Execution Environment (TEE) and Succinct Arguments of Knowledge (SNARK), which will be left as our future works. 

\section{Experimental Evaluation}
\label{sec:experiment}

\textbf{Implementation.} We implement a prototype of \textsc{GriDB} in Go~\cite{golang} based on Ethereum~\cite{geth} and Harmony~\cite{harmony}. 
We adopt a BFT consensus with BLS multi-signature~\cite{harmony_consensus} as the intra-shard consensus and a library named ate-pairing~\cite{atepairing} for the VSO. 
The on-chain cross-shard mechanism of \textsc{GriDB} is similar to that of Monoxide~\cite{monoxide}.
Particularly, to commit a cross-shard transaction, each of its related shards needs to validate and commit it.
Only if the transaction is committed in the blockchains of all its related shards, it is regarded as being committed successfully.
This can be checked based on a list of Merkle proofs, each corresponding to a related shard.
Besides, by checking the transaction hash included in every Merkle proof, it can be guaranteed that every related shard commits the same transaction.
The optimization designs in \autoref{sec:design_refinement} are also implemented.
To implement a MySQL interface to \textsc{GriDB}, we adopt a storage-agnostic SQL engine with in-memory table implementation~\cite{go_mysql}.


\textbf{Setup.} The testbed is composed of 16 machines, each of which has an Intel E5-2680V4 CPU and 64 GB of RAM, and a 10 Gbps network link. Similar to~\cite{rapidchain, omniledger}, to simulate geographically-distributed nodes, we set the bandwidth of all connections between nodes to $20$ Mbps and impose a latency of $100$ ms on the links in our testbed.

\textbf{Baseline.} For comparison, we implement a non-sharding blockchain database. This type of blockchain database does not need to consider the challenge of cross-shard query and inter-shard balancing because each node stores and processes the whole database. For a fair comparison, this blockchain database also adopts the signature-based BFT consensus adopted by \textsc{GriDB} as its underlying consensus. 
The basic idea of the non-sharding blockchain database is similar to that of the existing works such as FalconDB and SEDBD~\cite{falcondb,SEBDB} except that they adopt the other variants of BFT consensus and support some other functionalities (such as indexes).
Moreover, we implement an on-chain sharding blockchain database including shard-cooperation cross-shard query and stop-restart inter-shard migration based on our strawman system in \autoref{sec:strawman}.

\textbf{Workloads.} We evaluate the performance of \textsc{GriDB} using TPC-H~\cite{tpch} which is widely used by the database community.
It consists of 8 tables for each dataset and 22 types of SQL queries.
Our experiments are run on a database with 16 TPC-H datasets which are uniformly split across shards. 
Besides, we add data transactions, each of which insert, delete or update a new row, for the workload of each dataset.
To simulate the cross-shard query, there is a proportion of query transactions involving tables in different shards and the proportion is called \emph{cross-shard ratio}.
To simulate the workload imbalance, similar to~\cite{squall, live_reconf_fast_tx}, we set two imbalanced settings. For low imbalance, we adopt a Zipfian distribution where two-thirds of the accesses go to one-third of the datasets. For high imbalance, 40\% of transactions follow the Zipfian in low imbalance, and the other transactions target 4 datasets initially on the first shard.

\subsection{Overall Performance}

	

To evaluate the scalability, we measure the transaction throughput in TPS for the non-sharding blockchain database and \textsc{GriDB} with varying percentages of query transactions and cross-shard ratios. 
We deploy 30 nodes for each shard. 
\autoref{fig:tps_shard} shows that the measured TPS of \textsc{GriDB} increases linearly with the number of shards and decreases when there are more cross-shard query transactions in the workload. It is because the data transactions only involve one shard, and the verification is simple. 
However, a query transaction is computationally-intensive (it requires $0.17\sim2.38$ seconds even in a local database as discussed in \autoref{sec:experiment_cross_shard}) and needs the delegation-based procedure for cross-shard verification, thus, committing query transactions costs more.
Moreover, a query involving more shards causes more table transfers and more complex proof generation among the delegated nodes, which will be further studied in \autoref{sec:experiment_cross_shard}. 
In comparison with \textsc{GriDB}, the on-chain sharding blockchain database has a similar throughput when there are no cross-shard queries. 
However, its throughput drops to nearly $0$ when $50\%$ or $100\%$ of queries are cross-shard. 
It is because, for the on-chain one, the table transfer among shards caused by cross-shard queries can result in serious network blocked.

To evaluate the performance of \textsc{GriDB} for cross-shard data transactions, we pack cross-shard queries (used to delete the cross-shard query results) into cross-shard data transactions.
According to \autoref{fig:tps_shard}, \textsc{GriDB}'s throughput for cross-shard data transactions is similar to that for cross-shard query transactions.
It is because, as described in \autoref{sec:system_overview}, \textsc{GriDB} implements a delete statement by marking old data as invalid.
Except for reaching consensus on cross-shard query results like a cross-shard query transaction, a cross-shard data transaction needs to include the information of marking the results as invalid, and each node needs to delete the results from its in-memory tables. 
However, these additional overheads are negligible. 
Thus, the expense of cross-shard data transactions and cross-shard query transactions are similar.

\begin{figure}[t]
    \centering
    \includegraphics[width=0.82\linewidth]{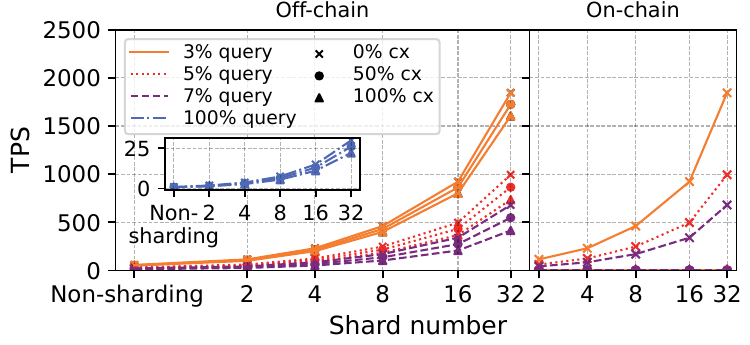}
    \caption{Transaction throughput for \textsc{GriDB}, the on-chain sharding blockchain database, and the non-sharding blockchain database (cx means cross-shard ratio.)}
    \label{fig:tps_shard}
\end{figure}

We also evaluate the storage overhead per node after loading all tables in the non-sharding blockchain database and \textsc{GriDB} with varying shard numbers. 
The results are given in \autoref{fig:storage}. 
Because each row is committed in the form of a data transaction and the data transactions are packed into blocks, loading the tables will introduce the block-related data including transaction-related and header-related data. 
From \autoref{fig:storage}, we can observe that, first, as the number of shards increases, the storage overhead for each node is reduced. 
Second, the transaction-related data cost half storage compared with the raw data.
Third, compared with the other data, the storage of headers can be ignored. 
Forth, because the largest table in the evaluation consists of 6 million rows, the public key size of verifiable set operation (VSO) is about $0.76$ GB. We regard the storage overhead caused by the public key as acceptable in the case of tables with millions of rows since it is considerably less than the recommended storage space of most blockchain nodes (such as 2 TB in Ethereum~\cite{geth_requirement}) nowadays.
Additionally, the storage of the on-chain sharding database is the same as that of \textsc{GriDB}.

\begin{figure}[t]
    \centering
    \includegraphics[width=0.75\linewidth]{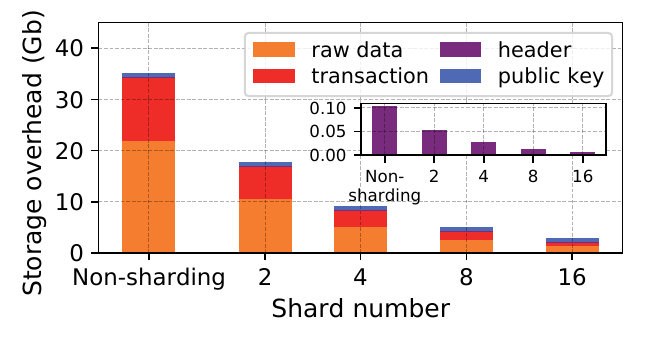}
	\caption{Storage overhead per node for \textsc{GriDB} and the non-sharding blockchain database.}
    \label{fig:storage}
\end{figure}

\subsection{Performance of Cross-shard Query}
\label{sec:experiment_cross_shard}


We evaluate the performance of cross-shard queries. For comparison, we adopt two approaches providing the same functionality as our cross-shard query.
These approaches are motivated by two previous works, i.e., vSQL~\cite{vsql} and libsnark~\cite{libsnark}, which can support arbitrary SQL queries based on interactive proof and SNARKs, respectively. Depending on either of these two works, any nodes can directly provide the result of a cross-shard query and a proof to the clients without consensus. We also evaluate the performance of the local computation for SQL in our nodes, based on MySQL. The server time is the time required for the server to evaluate the query and produce a valid proof and the client time is the time for the client to verify the proof. In \textsc{GriDB}, the server time is the duration from Line 2 to Line 6 in \autoref{alg:cross_shard_query}, and the client time is the duration of Line 7 in \autoref{alg:cross_shard_query}. 

\begin{table}[t]
	\centering
	\caption{Comparison of server and client times for evaluating queries using different approaches (The results for vSQL and SNARKs are provided in~\cite{vsql}.)}
	\scalebox{0.8}{
    \begin{tabular}{c|cc|cc|cc|c}
    \hline
    & \multicolumn{2}{c|}{vSQL} & \multicolumn{2}{c|}{SNARKs} & \multicolumn{2}{c|}{\textsc{GriDB}} & MySQL \\
    \hline\hline
    Query & Server & Client & Server & Client & Server & Client &  \\
    \hline
    \#19 & 4892s & 162ms & 196000s  & 6ms & 41.14s & 221ms & 2.38s \\
    \#6 & 3851s & 129ms & 19000s & 6ms & 4.93s & 221ms & 1.44s \\
    \#5 & 5069s & 398ms & 615000s & 110ms & 490.33s & 221ms & 1.95s \\
    \#2 & 2346s  & 508ms & 58000s & 40ms & 56.86s & 222ms & 0.17s \\
    \hline
    \end{tabular}
    }
    \label{table:cross_shard_query}
\end{table}

As a representative example, we pick the query \#19, \#6, \#5, \#2 in TPC-H and the results are given in \autoref{table:cross_shard_query}. These queries include most SQL types, e.g., join, range, min and nested query. According to \autoref{table:cross_shard_query}, the server time of \textsc{GriDB} is orders of magnitude less than that of vSQL and SNARKs while the client time is similar. For the server time, it is because our cross-shard query only constructs the expensive ADS for a few cross-shard operators while the security of the other operations depends on the intra-shard consensus. For the client time, it is because the clients of \textsc{GriDB} only need to check whether their query transactions are confirmed or not via SPV. Note that the evaluation is based on the worst case, which means the tables for each cross-shard query are all located in different shards. 

The time cost of each step for the queries in \textsc{GriDB} is summarized in \autoref{table:detail_query_time}. The results shows that the three steps occupy most of the time, matching the performance analysis in \autoref{sec:cross_shard_query}. Furthermore, from \autoref{table:detail_query_time}, we have the following observations. First, according to the result of MySQL in \autoref{table:cross_shard_query}, query \#19 is the most complex one and the nodes spend more time on validating it during the intra-shard consensus, thus its confirmation latency is the most. Then, the proof generation and table transfer of query \#5 is the most, because the query needs to join six tables, which results in six cross-shard operators in the worst case. Finally, the time cost of query \#6 is the least, because it is a simple 3-dimensional range query followed by an aggregation for a single table.

\begin{table}[t]
	\centering
	\caption{Time of each step for queries in \textsc{GriDB} (CL: Confirmation latency, PG: Proof generation, TT: Table transfer.) 
	}
	\scalebox{0.8}{
    \begin{tabular}{c|c|c|c|c}
    \hline 
    Query & CL & PG & TT & The others \\
    \hline\hline
    \#19 & 4.38s & 36.74s & 4.04ms & 10ms \\
    \#6 & 3.44s & 1.44s & 0s & 5ms \\
    \#5 & 3.95s & 483.01s & 3.2s & 100ms \\
    \#2 & 2.17s & 54.46s & 139.57ms & 80ms \\
    \hline
    \end{tabular}
	\label{table:detail_query_time}}
\end{table}

We also evaluate the performance of the cross-shard query of \textsc{GriDB} with varying table size and number of related shards. We scale the number of rows in the largest participating table in query \#5 from $6 \times 10^3$ to $6 \times 10^6$ and distribute its participating tables to $1\sim6$ shards. \autoref{fig:query_different_table_size} and \autoref{fig:query_different_shard_num} show that the time cost is significantly reduced when the participating tables are smaller or there are fewer related shards. It is because the complexity of proof generation depends on the participating table size and the number of cross-shard operators, matching the analysis in \autoref{sec:cross_shard_query}.

\begin{figure}[t]
	\centering
	\subfloat[][Table size]{
		\begin{minipage}[t]{0.5\linewidth}
			\centering
			\includegraphics[width=0.95\linewidth]{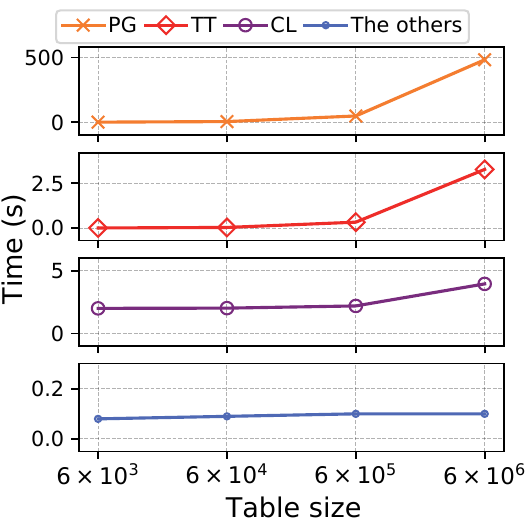}
		\end{minipage}%
		\label{fig:query_different_table_size}
	}
	\subfloat[][Number of the related shards]{
		\begin{minipage}[t]{0.5\linewidth}
			\centering
			\includegraphics[width=0.95\linewidth]{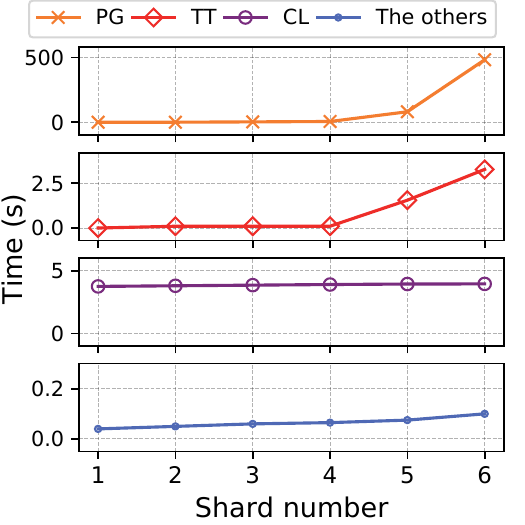}
		\end{minipage}
		\label{fig:query_different_shard_num}
	}
	\caption{Performance for query \#5 with different table size and number of related shards in \textsc{GriDB}.}
	\label{fig:query_different_factor}
\end{figure}

\subsection{Performance of Inter-shard Balancing}

\begin{figure}[t]
	\centering
	\subfloat[][Fluctuation of throughput during migration.]{
		\begin{minipage}[t]{\linewidth}
			\centering
			\includegraphics[width=\linewidth]{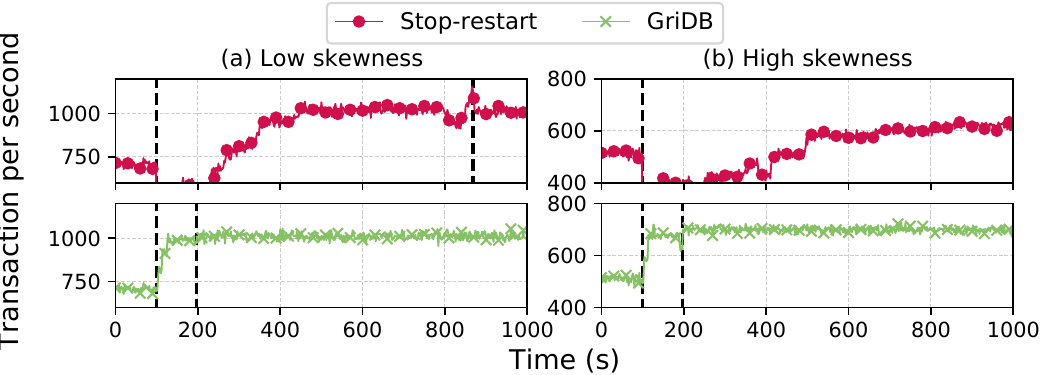}
		\end{minipage}%
		\label{fig:live_migration_period_1}
	}
	\qquad
	\subfloat[][Statistics on migration time and throughput.]{
	\scalebox{0.8}{
        \begin{tabular}{c|c|c}
        \hline 
        Migration time (s), TPS & Low skewness & High skewness \\
        \hline\hline
        Stop-restart & 770, 981 & 10678, 623  \\
        \textsc{GriDB} & 96, 1012 & 96, 700 \\
        \hline
        \end{tabular}}
        }
		\label{fig:live_migration_period_2}
    \caption{Transaction throughput during inter-shard migration with varying skewness.}
    \label{fig:migration_period}
\end{figure}

\begin{figure}[t]
		\centering
		\subfloat[][Migration time]{
			\begin{minipage}[t]{0.3\linewidth}
				\centering
				\includegraphics[width=\linewidth]{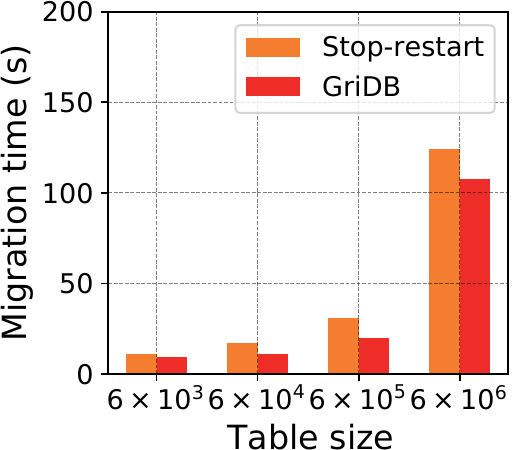}
			\end{minipage}%
			\label{fig:different_table_size_migration_time}
		}
		\hfill
		\subfloat[][Latency of txs for the migrating table]{
			\begin{minipage}[t]{0.3\linewidth}
				\centering
				\includegraphics[width=\linewidth]{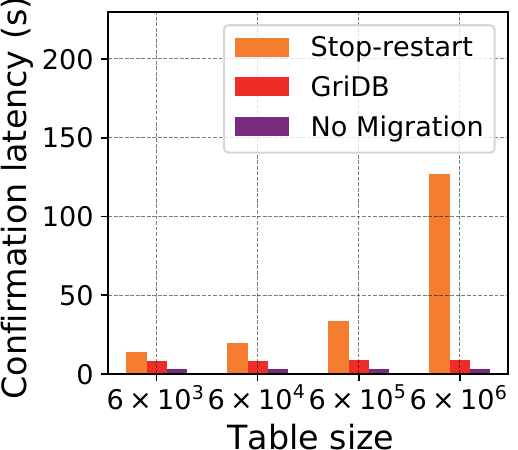}
			\end{minipage}
			\label{fig:different_table_size_related_confirmation_latency}
		}
		\hfill
		\subfloat[][Latency of txs for the other tables]{
			\begin{minipage}[t]{0.3\linewidth}
				\centering
				\includegraphics[width=\linewidth]{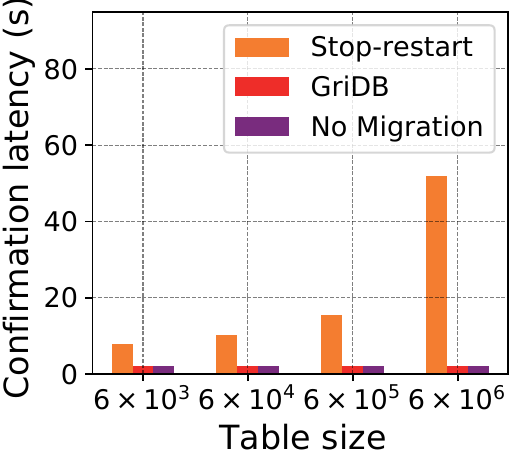}
			\end{minipage}%
			\label{fig:different_table_size_confirmation_latency}
		}
		\caption{Inter-shard migration for tables with varying size.}
		\label{fig:detailed_live_migration}
\end{figure}




We evaluate the throughput during migration via the off-chain live migration in \textsc{GriDB} and the stop-restart approach in the on-chain sharding blockchain database with various skewed workloads and the results are given in \autoref{fig:migration_period}. The process includes 48 migrations. After migration, the throughput increases by $1.40\times$ for the low skewness and $1.37\times$ for the high skewness.
It shows the load balancing among shards is helpful for the performance of sharding blockchain database. The off-chain live migration can shorten the migration time by nearly 87\% compared with the stop-restart approach for the low skewness and 99\% for the high skewness. 
Furthermore, the performance degradation in \textsc{GriDB} is minimal during migration. 
It is because, in \textsc{GriDB}, the off-chain manner significantly reduces the number of on-chain transactions, avoiding the massive overhead for consensus, and the dual mode minimizes service interruption during migration using the cross-shard off-chain notification.

\autoref{fig:detailed_live_migration} plots the impact of the table size on the migration time, the confirmation latency of transactions for the migrating table and the other tables in the shards involved. 
\autoref{fig:different_table_size_migration_time} shows that it costs more time to migrate a bigger table for both approaches. 
However, the migration time in \textsc{GriDB} is less than that in the stop-restart approach because there are only two on-chain transactions in \textsc{GriDB}, and a bigger table only requires more transmission time rather than more consensus rounds like the stop-restart approach. 
According to \autoref{fig:different_table_size_related_confirmation_latency} and \autoref{fig:different_table_size_confirmation_latency}, in \textsc{GriDB}, the confirmation latency for the tables during the migration is similar to that during normal mode (i.e., ``No Migration'' in the figures). 
Furthermore, the latency of transactions in the migrating table during migration is more than the latency during normal mode. 
It is because, in the dual mode, they are required to notify the destination shard.


\section{Related Work}
\label{sec:related_work}

\subsection{Blockchain Database}

We summarize the recent works that support various queries (e.g., SQL, key-value query, semantic query) on the blockchain as follows.

In the conventional blockchain, 
a query requires nodes to traverse the whole blockchain to guarantee a complete search.
To avoid the expensive traversing, Pei \textit{et al.} design a Merkle Semantic Trie for efficient semantic query~\cite{semantic}.
SEBDB~\cite{SEBDB} builds three indices for efficient SQL queries, which can be complementary to our work. For instance, one can add these indices into each node in \textsc{GriDB} to improve the query efficiency.
Motivated by outsourced databases~\cite{survey_outsourced}, some works provide verifiable queries with blockchain clients, enabling clients to verify query results from untrusted nodes.
SEBDB~\cite{SEBDB} designs a Merkle B-tree-based ADS for verifiable range query. 
Zhang \textit{et al.} support verifiable Boolean range query~\cite{vchain,gemtree} while
Falcon~\cite{falcondb} supports verifiable SQL~\cite{integridb} for blockchain databases.
However, they are constructed on top of the non-scale-out blockchains suffering from poor scalability.

The work most related to us is BlockchainDB~\cite{blockchaindb}, a key-value database on top of a sharding blockchain. In BlockchainDB, each transaction includes a key-value pair and the database layer provides clients with the interfaces of get, put, and verify operations. 
However, due to the simplicity of the key-value data model, every shard is isolated and every operation can be served by a single shard.
In comparison, \textsc{GriDB} considers a relational sharding blockchain database, thus having richer transactional semantics than BlockchainDB. 
It brings the challenge of cross-shard queries as discussed in \autoref{sec:challenges}.
Moreover, the data management for sharding, i.e., the inter-shard balancing, has not been considered in BlockchainDB. 
Besides, \textsc{GriDB}'s off-chain live migration can be applied to BlockchainDB by considering a key-value database as a particular case of a relational database with two-column tables.

\subsection{Sharding}

Elastico~\cite{elastico} is the first decentralized sharding blockchain. 
Every shard is responsible for validating a set of transactions via PBFT. 
A final shard verifies all the transactions received from shards and pack them into a global block. 
However, it only realizes verification sharding and each node needs to store all blocks. 
Omniledger~\cite{omniledger} is the first blockchain achieving full sharding by a client-driven cross-shard mechanism. 
Another client-driven sharding system named Chainspace~\cite{chainspace} is presented to support sharding for smart contracts. 
The client-driven cross-shard mechanisms put extra burden on typically lightweight user nodes and are vulnerable to denial-of-service attacks. To further improve the performance, researchers propose shard-driven mechanisms, e.g., RapidChain~\cite{rapidchain} and
Monoxide~\cite{monoxide}.
Additionally, some works~\cite{pyramid,smart_contract_sharding,sigmod_sharding} share a similar idea in which some nodes store the blockchain state of multiple shards and thus can efficiently validate and execute cross-shard transactions.

Most of the existing sharding blockchains focus on transfer transactions (i.e., user-to-user transfers of digital funds) and smart contracts.
Thus, their challenge of cross-shard transactions results from cross-shard payments (i.e., the transfer between accounts in different shards) or cross-shard smart contract calls (i.e., a smart contract that utilizes smart contracts at different shards).
In comparison, in \textsc{GriDB}, the challenge result from \emph{cross-shard queries} (i.e., queries to tables from different shards), \emph{cross-shard data operations} (i.e., insert, delete or update operations on tables from different shards), and \emph{inter-shard load balancing} caused by workload imbalance among shards. 
As discussed in \autoref{sec:introduction}, these cross-shard database transactions make the on-chain cross-shard mechanism of the existing blockchain sharding systems suffer from extremely poor performance.

\subsection{Off-chain}

Off-chain protocols aim for blockchain scalability and build on top of the existing blockchains without changing trust assumptions~\cite{sok_offchain}. 
The main idea is to avoid processing every transaction via the consensus and instead utilize the consensus only to undertake critical tasks (e.g., settlement and dispute resolution)\footnote{Note that the word ``off-chain'' in BlockchainDB~\cite{blockchaindb} means it enables a node of a shard to verify the data from the other shards, which is different from our paper.}.
There are two types of off-chain protocols.
The first one is \emph{channels} in which the involved parties can update their balance unanimously and privately by exchanging authenticated transitions off-chain~\cite{lightning, raiden, sprites, cycle}.
The second one is \emph{commit-chains} which deploys a centralized but untrusted party to collect transactions on a child-blockchain and periodically update them to the parent-blockchain~\cite{khalil2018commit,poon2017plasma}.

\textsc{GriDB} shares the same idea with the existing off-chain protocols but uses a different approach (i.e., ADS) and targets a different problem (i.e., the high expense of cross-shard mechanism in the sharding blockchain database). 
To the best of our knowledge, \textsc{GriDB} is the first work to present an off-chain cross-shard mechanism to ease massive cross-shard data exchange, which may motivate the design of other sharding blockchain databases in the future.

\section{Conclusion}
\label{sec:conclusion}

We present \textsc{GriDB}, a sharding blockchain database that achieves a few thousand transactions per second on about one thousand nodes in a Byzantine environment while supporting the functionalities of data insert/update, relational queries, and database management. 
The off-chain cross-shard mechanism, including delegation-based cross-shard query and off-chain live migration, is the key contribution in \textsc{GriDB}. They offer a database layer of abstraction on top of the existing sharding blockchain and hide the complexity of the data and workload partition in the underlying sharding blockchain from the clients. \textsc{GriDB} also includes some database key components, including query optimization, and load scheduler. We plan to study other sharding strategies in future work, such as functional partitioning for sharding blockchain databases.

\begin{acks}
This research was supported by fundings from the Key-Area Research and Development Program of Guangdong Province under grant No. 2021B0101400003,  Hong Kong RGC Research Impact Fund (RIF) with the Project No. R5060-19, General Research Fund (GRF) with the Project No. 152221/19E, 152203/20E, 152244/21E, and 152169/22E, and the National Natural Science Foundation of China (NSFC) 61872310, 62172453 and 62272496, Shenzhen Science and Technology Innovation Commission (JCYJ20200109142008673), the Major Key Project of PCL (PCL2021A06), and the Pearl River Talent Recruitment Program (2019QN01X130).
We thank all anonymous reviewers who helped improve the paper.
\end{acks}

\bibliographystyle{ACM-Reference-Format}
\bibliography{jobname}

\end{document}